\pgfplotsset{ 
  compat=newest, 
  legend style =
  {font=\small \sffamily},
  label style = {font=\small\sffamily},
every tick label/.append style={font=\small}}
\def\beq{\begin{equation}}
\def\eeq{\end{equation}}
\newtheorem{theorem}{Theorem}
\newtheorem{definition}[theorem]{Definition}
\newtheorem{lemma}[theorem]{Lemma}
\newtheorem{example}{Example}
\newcommand{\abs}[1]{\lvert#1\rvert}
\newcommand{\ds}{\displaystyle}
\newcommand{\ba}{\begin{array}}
\newcommand{\ea}{\end{array}}
\newcommand{\be}{\begin{equation}}
\newcommand{\ee}{\end{equation}}
\newcommand{\eps}{\varepsilon}
\newcommand{\mc}{\mathcal}
\newcommand{\ov}{\overline}
\newcommand{\1}{\mathbbm{1}}
\newcommand{\R}{\mathbb{R}}
\newcommand{\de}{\mathrm{d}}
\DeclareMathOperator{\dist}{dist}
\def\1{\mathds{1}}
\def\R{\mathbb{R}}
\def\diag{{\rm diag}\,}
\def\x{{\bf x}}
\def\sgn{\text{sgn}}
 \def \l {2.4}
\def \h {1.8}
\title{\LARGE \bf
On imitation dynamics in potential population games
}
\author{Lorenzo Zino, Giacomo Como, and Fabio Fagnani
\thanks{The authors are with the ``Lagrange'' Department of Mathematical Sciences, 
        Politecnico di Torino, 10129 Torino, Italy
        {\tt\small \{lorenzo.zino, giacomo.como, fabio.fagnani\}@polito.it}}%
\thanks{L. Zino is also with the ``Peano'' Department of Mathematics, Universit{\`a} di Torino,
        10123 Torino, Italy
        {\tt\small lorenzo.zino@unito.it}}%
        \thanks{G. Como is also with the Department of Automatic Control, Lund University, 22100 Lund, Sweden
        {\tt\small giacomo.como@control.lth.se}}%
}
\begin{document}

\maketitle
\thispagestyle{empty}
\pagestyle{empty}

\begin{abstract}

Imitation dynamics for population games are studied and their asymptotic properties analyzed. In the considered class of imitation dynamics ---that encompass the replicator equation as well as other models previously considered in evolutionary biology--- players have no global information about the game structure, and all they know is their own current utility and the one of fellow players contacted through pairwise interactions. For potential population games, global asymptotic stability of the set of Nash equilibria of the sub-game restricted to the support of the initial population configuration is proved. These results strengthen (from local to global asymptotic stability) existing ones and generalize them to a broader class of dynamics. The developed techniques highlight a certain structure of the problem and suggest possible generalizations  from the fully mixed population case to imitation dynamics whereby agents interact on complex communication networks.

\end{abstract}

\section{Introduction}

Imitation dynamics provide a powerful game-theoretic paradigm used to model the evolution of behaviors and strategies in social, economic, and biological systems \cite{Weibull1995, Bjornerstedt1996, Hofbauer2003}. The assumption beyond these models is that individuals interact in a fully mixed population having no global information about the structure of the game they are playing. Players just measure their own current utility and, by contacting other individuals, they get aware of the the action currently played by them and of the associated utility. Then, in order to increase their utility, players may revise their action and adopt the one of the contected fellow players. 

We focus on the asymptotic behavior of such imitation dynamics. Available result in this area can be found in \cite{Nachbar1990, Hofbauer2000, Sandholm2001, Sandholm2010}. In particular, \cite{Sandholm2010} contains a study of local stability and instability for the different kinds of rest points of such dynamics. These results, however, deal only with local stability, therefore one can not conclude global asymptotic stability. Indeed, only for specific dynamics, such as the replicator equation, and for some specific classes of games, a global analysis has been carried on  \cite{Bomze2002, Shamma2005, Fox2012, Cressman2014, Barreiro-Gomez2016}. 

This work contributes to expanding the state of the art on the analysis of the asymptotic behavior of imitation dynamics. For the important class of potential population games, we obtain a global convergence result, Theorem \ref{main theorem},  that is stronger and more general than the results presented in the literature. Another novelty of this work consists in the definition of imitation dynamics, that is more general than the classical one \cite{Sandholm2010}. 

The paper is organized as follows.
Section \ref{sec:imitation-dynamics} is devoted to the introduction of population games and to the definition of the class of imitation dynamics. Both these concepts are presented along with some explanatory examples. Thereafter, the main results on the asymptotic behavior of the imitation dynamics are presented and proved in Section \ref{sec:asymptotics}. Examples of the use of these results will then be presented in Section \ref{sec:examples}. Finally, Section \ref{sec:conclusions} discusses some future research lines.

Before moving to the next section, let us define the following notation: $\delta^{(i)}$ denotes a vector of all zeros but a $1$ in the $i$th position.
We denote the sets of reals and nonnegative reals by $\R$ and $\R_+=\{x\in\R:\, x\ge0\}$, respectively. 

\section{Population Games and Imitation Dynamics}\label{sec:imitation-dynamics}
Throughout the paper we study imitation dynamics in continuous population games. 
In such setting, a continuum of players of total unitary mass choose actions from a finite set $\mc A$ and the reward $r_i(x)$ of all those players playing action $i\in\mc A$ is a function of the empirical distribution $x$ of the actions played across the population. 
Formally, let 
$\mc X=\left\{x\in\R_+^{\mc A}:\,\sum\nolimits_{i\in\mc A}x_i=1\right\}$
be the unitary simplex over the action set $\mc A$ and refer to vectors $x\in\mc X$ as \emph{configurations} of the population. If the population is in configuration $x\in\mc X$, then a fraction $x_i$ of the players is playing action $i$, for $i\in\mc A$. 
Let  $r:\mc X\to\R^{\mc A}$ be \emph{reward vector function} whose entries $r_i(x)$ represent the reward received by any player playing action $i\in\mc A$ when the population is in configuration $x\in\mc X$. Throughout, we assume the reward vector function $r(x)$ to be Lipschitz-continuous over the configuration space $\mc X$. Let 
$$r_*(x):=\max_{i\in\mc A}r_i(x)\,,\qquad \ov r(x):=\sum_{i\in\mc A}x_ir_i(x)$$
stand for the maximum and, respectively, the average rewards in a configuration $x\in\mc X$.  
Then, the set of Nash equilibria of the considered continuous population game is denoted by
\be\label{eq:nash}\mc N=\left\{x\in\mc X:\,x_i>0\Rightarrow r_i(x)=r^*(x)\right\}\,.\ee
As is known, every continuous population game admits a Nash equilibrium \cite[Theorem 2.1.1]{Sandholm2010}, so $\mc N$ is never empty.

\begin{example}[Linear reward population games]\label{ex:linear-reward}
A class of continuous population games is the one where the rewards are linear functions of the configuration, i.e.,  when 
\be\label{eq:linear-reward}r(x)=Rx\,,\ee
for some reward matrix $R\in\mathbb R^{\mc A\times\mc A}$. 
Linear reward population games have a standard interpretation in terms of symmetric $2$-player games \cite{Weibull1995} played by each player against the average population \cite{Hofbauer1998}. Population games with binary action space $\mc A=\{1,2\}$ and linear reward function \eqref{eq:linear-reward} with
\be\label{eq:R-binary}
R=\left[\ba{ll}a &b \\ c&d\ea\right]\ee
can be grouped in the following three classes: 
\begin{enumerate}
\item[(i)] for $a>c$ and $d>b$, one has binary coordination games \cite{Cooper1999, Easley2010} (such as the stag hunt game \cite{Skyrms2004}), where the set of Nash equilibria $\mc N=\{\delta^{(1)},\delta^{(2)},\ov x\}$ comprises the two pure configurations and the interior point $\bar x$ with $$\bar x_1=1-\bar x_2=(d-b)/(a-c+d-b)\,;$$ 
\item[(ii)] for $a<c$ and $d<b$, one has anti-coordination games (including hawk-dove game \cite{Rapoport1966, Sugden1986}), where the only Nash equilibrium is the interior point $\bar x$ as above; 
\item[(iii)]
for other cases of the parameters (e.g., in the Prisoner's dilemma \cite{Easley2010}), there is one of the two actions $i$ that is (possibly weakly) dominating the other one $j$, and the pure configuration $\delta^{(i)}$ is the only Nash equilibrium.
\end{enumerate}
Larger action spaces admit no as simple classifications. 
\end{example}\medskip

In this paper, we are concerned with imitation dynamics arising when players in the population modify their actions in response to pairwise interactions  \cite{Hofbauer2003}. We assume that the population is fully mixed so that any pairs of players in the population meet with the same frequency \cite{Kurtz1981}. Upon a possible renormalization, the overall frequency of pairwise interactions between agents playing actions $i$ and $j$ can then be assumed equal to the product $x_ix_j$ of the fraction of players currently playing actions $i$ and $j$, respectively. 
When two players meet, they communicate to each other the action they are playing and the rewards they are respectively getting. Then, depending on the difference between the two rewards and possibly other factors, each interacting player either keeps playing the same action he/she is playing, or updates his/her action to the one of the other player. 

\begin{definition}[Imitation dynamics]
A (deterministic, continuous-time) imitation dynamics for a continuous population game with action set $\mc A$ and reward function vector $r(x)$ is the system of ordinary differential equations 
\be\label{eq:imitation-dynamics}
\dot x_i=x_i\sum_{j\in\mc A} x_j\left(f_{ji}(x)-f_{ij}(x)\right)\quad  i\in\mc A\,,
\ee
where,  for $i,j\in\mc A$, the function $f_{ij}(x)$ is Lipschitz-continuous on the configuration space $\mc X$ and such that 
\be\label{assumption:fij}\sgn\left(f_{ij}(x)-f_{ji}(x)\right)=\sgn\left(r_j(x)-r_i(x)\right),\;\; x\in\mc X.\ee
Equivalently, the imitation dynamics \eqref{eq:imitation-dynamics} may be rewritten as
\be\label{eq:imitation-dynamics-compact}\dot x=\diag(x)(F^T(x)-F(x))x\,,\ee 
where $F(x)=(f_{ij}(x))_{i,j}$ is a matrix-valued function on $\mc X$. 
\end{definition}\medskip 

Observe that, in order to satisfy \eqref{assumption:fij}, the functions $f_{ij}$s should clearly depend on the difference between the rewards $r_i(x)-r_j(x)$ in such a way that $f_{ij}(x)=f_{ji}(x)$, for every configuration $x$ such that $r_i(x)=r_j(x)$. In principle, these functions can possibly depend on the whole configuration $x$ in a non-trivial way. However, while our results hold true in such greater generality, we are mostly concerned  with the case where the functions $f_{ij}(x)$ only depend on the rewards' differences $r_i(x)-r_j(x)$, possibly in a different way for each different pair  of actions $i,j\in\mc A$. In fact, in this case, the considered imitation dynamics model makes minimal assumptions on the amount of information available to the players, i.e., they only know their own current action, the one of the other player met, and difference of their respective current rewards. In particular, players need not to know any other information about the game they are engaged in, such as, e.g., the current configuration of the population, the form of the reward functions,  or even the whole action space. 

{\remark This class of imitation dynamics generalize the ones considered in many papers \cite{Sandholm2010}, which satisfy
\be\label{assumption:sandholm}
r_i(x)\geq r_j(x)\iff f_{ki}(x)-f_{ik}(x)\geq f_{kj}(x)-f_{jk}(x)\,,
\ee
for every $i,j,k\in\mc A$. 
In fact, it is straightforward to check that \eqref{assumption:sandholm} is in general more restrictive than \eqref{assumption:fij}, that is obtained from \eqref{assumption:sandholm} in the case $k=j$. Notably, \eqref{assumption:sandholm} induces an ordering of the actions such that, when comparing two of them, the one with the larger reward should always result the more appealing to any third party, quite a restrictive condition that is not required in our more general formulation. Example \ref{ex:stochastic}, which follows, is a concrete example of a realistic situation in which our relaxed condition \eqref{assumption:fij} holds and \eqref{assumption:sandholm} does not.}\medskip

We now present two examples of imitation dynamics.

\begin{example}[Replicator Dynamics]\label{ex:replicator} In the case when $$f_{ij}(x)=\frac12\left(r_j(x)-r_i(x)\right),\qquad i,j\in\mc A\,,$$
or, equivalently, $F(x)=\frac12\left(\1r^T(x)-r(x)\1^T\right)$,
the imitation dynamics \eqref{eq:imitation-dynamics} reduces to the replicator equation 
\be\label{replicator} \dot x_i=x_i\left(r_i(x)-\ov r(x)\right)\,,\qquad i\in\mc A\,.\ee
Hence, imitation dynamics encompass and generalize the replicator equation, for which an extensive analysis has been developed, see, e.g., \cite{Weibull1995,Hofbauer1998,Taylor1978, Schuster1983}.\end{example}

\begin{example}[Stochastic Imitation Dynamics]\label{ex:stochastic}
Let 
\be\label{eq:stochastic} f_{ij}(x)=\frac{1}{2}+\frac{1}{\pi}\arctan(K_{ij}(r_j(x)-r_i(x)))\,,\ee
for $i,j\in\mc A$, where $K_{i,j}>0$. 
Such $[0,1]$-valued functions $f_{ij}(x)$ have an immediate interpretation as probabilities that players playing action $i$ switch to action $j$ when observing others playing such action $j$. Therefore these dynamics might be used when modeling mean-filed limits of stochastic imitation dynamics \cite{Benaim2003}. If the positive constants $K_{ij}$ are not all the same, the associated imitation dynamics may not satisfy \eqref{assumption:sandholm}, but still fit in our framework.
\end{example}\medskip

We now move on to discussing some general properties of imitation dynamics in continuous population games. To this aim, we first introduce some more notions related to Nash equilibria. For a nonempty subset of actions $\mc S\subseteq\mc A$, let $$\mc X_{\mc S}=\{x\in\mc X:\, x_i=0,\,\forall\,i\in\mc A\setminus\mc S\}$$ be the subset of configurations supported on $\mc S$ 
and let 
\be\label{eq:nash restricted}\mc N_{\mc S}=\left\{x\in\mc X_{\mc S}:\,x_i>0\Rightarrow r_i(x)\ge r_j(x),\,\forall\, j\in\mc S\right\}\ee
be the set of Nash equilibria of the population game restricted to $\mc S$.
Clearly, $\mc X_{\mc A}=\mc X$ and $\mc N_{\mc A}=\mc N$. 
Finally, we define the \emph{set of critical configurations} as
\be\label{def:Z}\mc Z=\bigcup_{\emptyset\ne\mc S\subseteq\mc A}\mc N_{\mc S}\,.\ee
Observe that $\mc Z$ includes the set of Nash equilibria $\mc N$ and can equivalently be characterized as
\be\label{Z-equivalent} \mc Z=\left\{x\in\mc X:\,\,x_i>0\Rightarrow r_i(x)=\ov r(x) \right\}\,.\ee
{\remark The set $\mc Z$ always includes the vertices $\delta^{(i)}$, $i\in\mc A$, of the  simplex $\mc X$. In fact, in the case when $|\mc A|=2$, the set of critical configurations consists just of the two vertices of $\mc X$ and the possible interior Nash equilibria of the game. For $|\mc A|\ge3$, the set of critical configurations $\mc Z$ includes, besides vertices of $\mc X$ and Nash equilibria of the game, all Nash equilibria of the sub-games obtained by restricting the action set to a non-trivial action subset $\mc S\subseteq\mc A$.}\medskip

Some basic properties of the imitation dynamics \eqref{eq:imitation-dynamics} are gathered in the following Lemma. These results are already proven in \cite{Sandholm2010} under the more restrictive condition on the dynamics. For the proof in our more general setting is included in the Appendix. 
\begin{lemma}\label{lemma:imitations-properties}
For any imitation dynamics \eqref{eq:imitation-dynamics} satisfying \eqref{assumption:fij}: 
\begin{enumerate}
\item[(i)] if $x(0)\in\mc X_{\mc S}$ for some nonempty subset of actions $\mc S\subseteq\mc A$, then $x(t)\in\mc X_{\mc S}$ for all $t\ge0$; 
\item[(ii)] if $x_i(0)>0$ for some $i\in\mc A$, then $x_i(t)>0$ for $t\ge0$; 
\item[(iii)] every restricted Nash equilibrium $x\in\mc Z$ is a rest point. 
\end{enumerate}
\end{lemma}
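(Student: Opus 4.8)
The plan is to verify each of the three properties by exploiting the explicit product structure $\dot x_i=x_i\sum_{j\in\mc A}x_j(f_{ji}(x)-f_{ij}(x))$, which shows that the right-hand side of every equation carries a factor $x_i$. Since the vector field is Lipschitz-continuous on $\mc X$ (being a product of Lipschitz factors), solutions exist and are unique, so the flow is well-defined and I may reason about invariance via standard ODE arguments.

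First I would prove (ii), since it is the cleanest and underpins the others. Fix $i$ with $x_i(0)>0$. Writing the $i$th equation as $\dot x_i=x_i\,g_i(x)$ with $g_i(x)=\sum_{j}x_j(f_{ji}(x)-f_{ij}(x))$, I observe that $g_i$ is bounded along the (compact-valued) trajectory, say $|g_i(x(t))|\le M$ on any finite interval $[0,T]$. Then Gr\"onwall-type integration of the scalar linear-in-$x_i$ equation gives $x_i(t)=x_i(0)\exp\!\big(\int_0^t g_i(x(s))\,\de s\big)\ge x_i(0)e^{-MT}>0$, so $x_i$ never reaches zero in finite time. This handles (ii).

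Next, (i) follows almost immediately. If $x(0)\in\mc X_{\mc S}$, then for every $i\in\mc A\setminus\mc S$ we have $x_i(0)=0$, and the factor $x_i$ in $\dot x_i$ forces $x_i(t)\equiv0$ to be the unique solution starting from that coordinate (by uniqueness of solutions, since $x_i\equiv0$ solves $\dot x_i=x_i g_i$ with the correct initial value). Hence $x(t)$ stays in the face $\mc X_{\mc S}$ for all $t\ge0$; I should also note that $\mc X$ itself is invariant because $\sum_i\dot x_i=\sum_{i,j}x_ix_j(f_{ji}-f_{ij})=0$ by antisymmetry of the summand under swapping $i\leftrightarrow j$, so the total mass is conserved and trajectories remain in the simplex.

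Finally, for (iii) I would take $x\in\mc Z$ and use the characterization \eqref{Z-equivalent}, namely $x_i>0\Rightarrow r_i(x)=\ov r(x)$. For any $i$ with $x_i>0$, every other action $j$ with $x_j>0$ then satisfies $r_i(x)=r_j(x)=\ov r(x)$, and assumption \eqref{assumption:fij} gives $\sgn(f_{ij}(x)-f_{ji}(x))=\sgn(r_j(x)-r_i(x))=0$, so $f_{ij}(x)=f_{ji}(x)$ and the summand $x_j(f_{ji}(x)-f_{ij}(x))$ vanishes; for $j$ with $x_j=0$ the summand vanishes trivially. Thus $\dot x_i=0$ for every $i$ with $x_i>0$, and $\dot x_i=0$ automatically for $i$ with $x_i=0$ because of the overall factor $x_i$. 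Hence $x$ is a rest point. The only genuinely delicate point is the finite-time non-extinction in (ii); the rest is bookkeeping with the antisymmetric structure and the sign condition \eqref{assumption:fij}, so I expect (ii) to be where care is needed to justify the exponential lower bound rigorously.
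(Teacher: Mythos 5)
Your proposal is correct and follows essentially the same route as the paper's own proof: Gr\"onwall (equivalently the explicit exponential formula) for the non-extinction claim in (ii), invariance of faces via the factor $x_i$ and uniqueness of solutions for (i), and the sign condition \eqref{assumption:fij} applied to the characterization \eqref{Z-equivalent} for (iii). Your generic bound $M$ on $g_i$ in part (ii) is in fact cleaner than the paper's constant, which is stated in terms of the rewards rather than the $f_{ij}$'s.
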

\section{Asymptotic Behavior of Imitation Dynamics  for Potential Population Games}\label{sec:asymptotics}
The main results of this work deal with the global asymptotic behavior of the imitation dynamics \eqref{eq:imitation-dynamics} for potential population games. Therefore, before presenting these results, we briefly introduce the notion of potential game \cite{Monderer1996} in the context of continuous population games.  
\begin{definition}
A population game with action set $\mc A$ and Lipschitz-continuous reward function vector $r:\mc X\to\R^{\mc A}$ is a \emph{potential population game} if there exists a \emph{potential function} $\Phi:\mc X\to\R$ that is continuous on $\mc X$, continuously differentiable in its interior, with gradient $\nabla\Phi(x)$ extendable by continuity to the boundary of $\mc X$, and such that 
\be\label{potential}
r_j(x)-r_i(x)=\frac{\partial}{\partial x_j}\Phi(x)-\frac{\partial}{\partial x_i}\Phi(x)\,,
\ee
for $i,j\in\mc A$, and almost every $x\in\mc X$.
\end{definition}\medskip

The asymptotic analysis of imitative dynamics for potential population games begins by proving that the potential function $\Phi(x)$ is never decreasing along trajectories of the imitation dynamics \eqref{eq:imitation-dynamics} and it is strictly increasing whenever $x$ does not belong to the set $\mc Z$ of critical configurations. This result, already known for more specific classes of dynamics \cite{Sandholm2010}, is thus generalized in the following result, whose proof is reported in the Appendix.


\begin{lemma}\label{lemma:Lyapunov} Let $r:\mc X\to\R^{\mc A}$ be the reward function vector of a potential population game with potential $\Phi:\mc X\to\R$. Then, every imitation dynamics \eqref{eq:imitation-dynamics} satisfying \eqref{assumption:fij} is such that 
\be\label{eq:Lyapunov}
\dot\Phi(x)=\nabla\Phi(x)\cdot\dot x\geq 0\,,\qquad\text{for all } x\in\mc X\,,
\ee
with equality if and only if $x\in\mc Z$, as defined in \eqref{def:Z}.  
\end{lemma}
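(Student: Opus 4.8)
The plan is to compute $\dot\Phi(x)=\nabla\Phi(x)\cdot\dot x$ directly from the imitation dynamics \eqref{eq:imitation-dynamics} and then exploit a symmetrization of the resulting double sum. Writing $\phi_i:=\partial\Phi(x)/\partial x_i$ and substituting \eqref{eq:imitation-dynamics}, I would first obtain
\be
\dot\Phi(x)=\sum_{i\in\mc A}\phi_i\,\dot x_i=\sum_{i,j\in\mc A} x_ix_j\,\phi_i\,(f_{ji}(x)-f_{ij}(x))\,.
\ee
Before manipulating this, I would observe that the dynamics are tangent to the simplex, i.e.\ $\sum_i\dot x_i=0$ (which follows from the antisymmetry of $f_{ji}-f_{ij}$ under $i\leftrightarrow j$ in the double sum). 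Consequently $\dot\Phi$ depends only on the differences $\phi_i-\phi_j$ and not on the individual partial derivatives. This matters because the potential relation \eqref{potential} pins down exactly these differences, and it also removes any ambiguity in interpreting $\nabla\Phi$ on the lower-dimensional set $\mc X$. I would further note that, since both sides of \eqref{potential} are continuous on $\mc X$ (the reward vector is Lipschitz and the gradient of $\Phi$ extends continuously to the boundary), the identity in fact holds at \emph{every} $x\in\mc X$, not merely almost everywhere.

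The core step is to relabel the dummy indices $i\leftrightarrow j$ and average the two resulting expressions, which yields
\be
2\dot\Phi(x)=\sum_{i,j\in\mc A} x_ix_j\,(\phi_i-\phi_j)\,(f_{ji}(x)-f_{ij}(x))\,.
\ee
Invoking \eqref{potential} in the form $\phi_i-\phi_j=r_i(x)-r_j(x)$ then gives
\be
2\dot\Phi(x)=\sum_{i,j\in\mc A} x_ix_j\,(r_i(x)-r_j(x))\,(f_{ji}(x)-f_{ij}(x))\,.
\ee
At this point the sign assumption \eqref{assumption:fij} does the work: it states precisely that $f_{ji}(x)-f_{ij}(x)$ and $r_i(x)-r_j(x)$ share the same sign, so every summand is a product of $x_ix_j\ge0$ with two factors of equal sign, hence nonnegative. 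Summing over all pairs delivers $\dot\Phi(x)\ge0$, which is \eqref{eq:Lyapunov}.

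For the equality case I would argue that $\dot\Phi(x)=0$ forces every summand to vanish. By \eqref{assumption:fij} the product $(r_i(x)-r_j(x))(f_{ji}(x)-f_{ij}(x))$ is \emph{strictly} positive whenever $r_i(x)\ne r_j(x)$ and is zero otherwise; hence a summand vanishes exactly when $x_ix_j=0$ or $r_i(x)=r_j(x)$. Therefore $\dot\Phi(x)=0$ if and only if $r_i(x)=r_j(x)$ for every pair of actions $i,j$ in the support of $x$, i.e.\ all supported actions earn a common reward. Since that common value necessarily equals the average reward $\ov r(x)=\sum_i x_ir_i(x)$, this condition coincides with the characterization \eqref{Z-equivalent} of $\mc Z$, completing the argument. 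I do not anticipate a genuinely hard obstacle: the computation is short once the symmetrization is in place. The only points requiring care are justifying that only the differences $\phi_i-\phi_j$ enter $\dot\Phi$ (so that \eqref{potential} suffices and the gradient is unambiguous on $\mc X$), and translating ``equal rewards across the support'' into membership in $\mc Z$ via \eqref{Z-equivalent}.
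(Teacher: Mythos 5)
Your proposal is correct and follows essentially the same route as the paper: expand $\dot\Phi$ via the dynamics, symmetrize the double sum over $i\leftrightarrow j$ to introduce the differences $\partial\Phi/\partial x_i-\partial\Phi/\partial x_j$, substitute the potential relation \eqref{potential} to get $\frac12\sum_{i,j}x_ix_j(r_i-r_j)(f_{ji}-f_{ij})$, and conclude sign and equality cases from \eqref{assumption:fij} together with the characterization \eqref{Z-equivalent} of $\mc Z$. Your added remarks on tangency to the simplex and on upgrading \eqref{potential} from almost-everywhere to everywhere by continuity are sound refinements but do not change the argument.
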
\medskip

An intuitive consequence of Lemma \ref{lemma:Lyapunov} and point (iii) of Lemma \ref{lemma:imitations-properties}  is that every imitation dynamics in a potential continuous population game has $\omega$-limit set coinciding with the set of critical configurations $\mc Z$. As we shall see, our main result, beyond formally proving this intuitive statement, consists in a significant refinement of it.  

Observe that, from \eqref{eq:nash} and \eqref{Z-equivalent}, the set $\mc B:=\mc Z\setminus\mc N$ of critical configurations that are not Nash equilibria satisfies
\be\label{def:B}\mc B=\left\{x\in\mc X:\,\,x_i>0\Rightarrow r_i(x)=\ov r(x)<r_*(x)\right\}\,.\ee
In other terms, critical configurations $x$ that are not Nash equilibria have the property that all actions played by a non-zero fraction of players in the population (i.e., those $i\in\mc A$ such that $x_i>0$) give the same average reward ($r_i(x)=\ov r(x)$), that is strictly less than the maximum reward ($\ov r(x)<r_*(x)$). This implies that $r_*(x)$ is necessarily achieved by some action that is not adopted by anyone, i.e., $\ov r(x)<r_*(x)=r_j(x)$ for some $j\in\mc A$ such that $x_j=0$.  

Notice that, in particular, $\mc B$ is a subset of the boundary of $\mc X$, since critical configurations that are not Nash equilibria necessarily have at least one zero entry.  The following result states that, in potential population games, every such configuration $\ov x\in\mc B$ has an interior neighborhood in $\mc X$ where the potential is strictly larger than in $\ov x$. This result is the main novelty of this work, being the key Lemma to prove global asymptotic stability results for the imitation dynamics.

\begin{lemma}\label{lemma:border-potential}
 Let $r:\mc X\to\R^{\mc A}$ be the reward function vector of a potential population game with potential $\Phi:\mc X\to\R$.
Let $\mc B=\mc Z\setminus\mc N$ be the set of critical configurations that are not Nash equilibria. Then, for every $\ov x\in\mc B$, there exists some $\varepsilon>0$ such that 
$\Phi(x)>\Phi(\bar x)$
for all $x\in\mc X$ such that 
\be\label{eq:xovx}||x-\bar x||<\eps\qquad\text{and}\qquad \sum_{i\in\mc A: r_i(\ov x)=r_*(\ov x)} x_i>0\,.\ee 
\end{lemma}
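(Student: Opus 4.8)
The plan is to reduce the statement to a first-order, directional analysis of $\Phi$ at $\ov x$. Write $\mc S=\{i\in\mc A:\ov x_i>0\}$ for the support of $\ov x$ and $\mc M=\{i\in\mc A:r_i(\ov x)=r_*(\ov x)\}$ for the reward-maximizing actions at $\ov x$. Since $\ov x\in\mc B$, the characterization \eqref{def:B} gives $r_i(\ov x)=\ov r(\ov x)<r_*(\ov x)$ for every $i\in\mc S$; hence $\mc M\cap\mc S=\emptyset$ and $\ov x_i=0$ for all $i\in\mc M$. Setting $d:=x-\ov x$, this forces $d_i=x_i\ge0$ for every $i\notin\mc S$, while $\sum_{i\in\mc A}d_i=0$ as $x,\ov x\in\mc X$. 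I also abbreviate $\kappa:=r_*(\ov x)-\ov r(\ov x)>0$.

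First I would convert the potential-game identity \eqref{potential} into a gradient statement: it forces $\pd_i\Phi(y)-\pd_j\Phi(y)=r_i(y)-r_j(y)$ for almost every $y$, so $\pd_i\Phi(y)=r_i(y)+c(y)$ for a scalar $c(y)$ independent of $i$. Integrating along the segment $y(t)=\ov x+t\,d$, $t\in[0,1]$, and using $\sum_i d_i=0$ to cancel $c(y)$, I obtain
\be
\Phi(x)-\Phi(\ov x)=\int_0^1 r(\ov x+t\,d)\cdot d\,\de t\,.
\ee
Lipschitz-continuity of $r$ bounds the remainder by $\tfrac{L}{2}\|d\|^2$, so $\Phi(x)-\Phi(\ov x)=r(\ov x)\cdot d+O(\|d\|^2)$; since $r_i(\ov x)=\ov r(\ov x)$ on $\mc S$ and $\sum_i d_i=0$, the linear term collapses to a sum over the complement of the support,
\be
r(\ov x)\cdot d=\sum_{i\notin\mc S}\big(r_i(\ov x)-\ov r(\ov x)\big)\,x_i\,.
\ee

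The decisive part of this sum is the one indexed by $\mc M$, which equals $\kappa\sum_{i\in\mc M}x_i$ and is strictly positive precisely under the admissibility constraint $\sum_{i\in\mc M}x_i>0$ imposed in \eqref{eq:xovx}. The natural route is then to isolate this guaranteed positive contribution and show it dominates everything else once $\eps$ is taken small.

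The main obstacle is that the very same linear term also collects the contributions of the non-support actions $i\notin\mc S\cup\mc M$ with $r_i(\ov x)<\ov r(\ov x)$: these enter with a negative sign and at the same first order as the positive $\mc M$-term, so a bare Taylor estimate does not settle the sign of $\Phi(x)-\Phi(\ov x)$. This domination is the heart of the lemma. I would split the linear term by the sign of $r_i(\ov x)-\ov r(\ov x)$ and seek to absorb both the low-reward part and the $O(\|d\|^2)$ remainder into a fraction of $\kappa\sum_{i\in\mc M}x_i$ by shrinking $\eps$; making this absorption rigorous---i.e.\ ruling out that admissible $x$ place disproportionate mass on low-reward actions relative to $\mc M$---is exactly where the potential-game structure near $\ov x$ must be exploited, and is the step I expect to be delicate.
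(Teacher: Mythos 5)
Your setup and your first--order computation coincide with the paper's: the paper also integrates $\nabla\Phi$ along the segment $y(t)=\ov x+tz$, $z=x-\ov x$, and uses \eqref{potential} to turn gradient differences into reward differences. Where you stop is exactly where the paper's argument does its work. Writing $\mc I=\{i:r_i(\ov x)=r_*(\ov x)\}$ (your $\mc M$) and $\mc J=\mc A\setminus\mc I$, the paper sets
$$m:=\min_{i\in\mc I}\frac{\pd\Phi(\ov x)}{\pd x_i}-\max_{j\in\mc J}\frac{\pd\Phi(\ov x)}{\pd x_j}=r_*(\ov x)-\max_{j\in\mc J}r_j(\ov x)>0\,,$$
chooses $\eps$ so that this gap stays above $m/2$ on the whole ball, and then, with $a:=\sum_{i\in\mc I}z_i=-\sum_{j\in\mc J}z_j>0$, estimates
$$\nabla\Phi(y(t))\cdot z=\sum_{i\in\mc I}z_i\frac{\pd\Phi(y(t))}{\pd x_i}+\sum_{j\in\mc J}z_j\frac{\pd\Phi(y(t))}{\pd x_j}\ \ge\ a\min_{i\in\mc I}\frac{\pd\Phi(y(t))}{\pd x_i}-a\max_{j\in\mc J}\frac{\pd\Phi(y(t))}{\pd x_j}\ \ge\ \frac{am}{2}\,,$$
whence $\Phi(x)\ge\Phi(\ov x)+am/2$. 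Note that this version has no quadratic remainder to absorb, because the gradient gap is controlled uniformly along the segment rather than only at $\ov x$; that is the one technical simplification you missed relative to the paper.

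However, the obstruction you flag is genuine, and it sits precisely inside the paper's second inequality: the bound $\sum_{j\in\mc J}z_j\,\pd\Phi(y)/\pd x_j\ge-a\max_{j\in\mc J}\pd\Phi(y)/\pd x_j$ is valid only if $z_j\le0$ for every $j\in\mc J$, whereas an admissible $x$ may place fresh mass on non-support actions belonging to $\mc J$ (your actions $i\notin\mc S\cup\mc M$). Your suspicion that this step is ``delicate'' understates the situation: it cannot be completed, because the lemma is false as stated. Take $\mc A=\{1,2,3\}$, $r(x)=Rx$ with the symmetric matrix $R=\left[\ba{ccc}0&1&-1\\1&0&0\\-1&0&0\ea\right]$, so that $\Phi(x)=\frac12x^\top Rx=x_1(x_2-x_3)$ is a potential. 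Then $\ov x=\delta^{(1)}\in\mc B$, since $r(\ov x)=(0,1,-1)$ and the only played action earns $0<1=r_*(\ov x)$, and $\mc I=\{2\}$. The point $x=(1-\eps^2-\eps/2,\ \eps^2,\ \eps/2)$ satisfies both conditions in \eqref{eq:xovx} for every small $\eps>0$, yet $\Phi(x)=(1-\eps^2-\eps/2)(\eps^2-\eps/2)<0=\Phi(\ov x)$. So the missing step in your proposal is not an artifact of your Taylor-expansion route: filling it requires strengthening the hypothesis (for instance, requiring $\sum_{i\in\mc I}x_i\ge\theta\sum_{i\notin\supp\ov x}x_i$ for some fixed $\theta>0$, or that every action outside $\supp\ov x$ be a maximizer), and the same repair is needed for the paper's own proof and for its use in Theorem \ref{main theorem}.
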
\vspace{.2cm}
\begin{proof}
For  $\bar x\in\mc B$, let
$\mc I:=\{i\in\mc A:\,r_i(\ov x)=r_*(\ov x)\}$ and $\mc J=\mc A\setminus\mc I=\{i\in\mc A:\,r_i(\ov x)<r_*(\ov x)\}\,.$
From \eqref{potential}, 
$$
m:=\min_{i\in\mc I}\frac{\partial \Phi(\bar x)}{\partial x_i}-\max_{j\in\mc J}\frac{\partial \Phi(\bar x)}{\partial x_j}=r_*(\bar{x})-\max_{j\in\mc J}r_j(\bar x)>0\,.
$$
By continuity of $\nabla\Phi(x)$, there exists $\eps>0$ such that 
\be\label{eq:Phi-ij}
\min_{i\in\mc I}\frac{\partial \Phi(x)}{\partial x_i}-\max_{j\in\mc J}\frac{\partial \Phi(x)}{\partial x_j}\ge \frac m2\,,
\ee
for every $x\in\mc X$ such that $||x-\ov x||<\eps$. Then, fix any $x\in\mc X$ satisfying \eqref{eq:xovx}, let $z=x-\ov x$, and observe that 
\be\label{eq:sumz}a:=\sum_{i\in\mc I}z_i=-\sum_{j\in\mc J}z_j>0\,.\ee 
It then follows from \eqref{eq:Phi-ij} and \eqref{eq:sumz} that, for every point 
$$y(t)=\ov x+tz\,,\qquad t\in[0,1]\,,$$
along the segment joining $\ov x$ and $x$, one has that 
$$
\ba{rl}
\nabla\Phi(y(t))\cdot z=&\ds\sum_{i\in\mc I}z_i\frac{\partial}{\partial x_i}\Phi(y(t))-\sum_{j\in\mc J}z_j\frac{\partial}{\partial x_j}\Phi(y(t))\\
\ge&\ds a\min_{i\in\mc I}\frac{\partial}{\partial x_i}\Phi(y(t))-a\max_{j\in\mc J}\frac{\partial}{\partial x_j}\Phi(y(t))\\
\ge&\ds\frac{a m}2\,,
\ea
$$
so that 
$$\Phi(x)=\Phi(\ov x)+\int_0^1\left(\nabla\Phi(y(t))\cdot z\right)\,\de t\ge\Phi(\ov x)+\frac{am}2>\Phi(\ov x).$$

\end{proof}

In order to understand the novelty of this result we consider that in \cite{Sandholm2010}, where the stability of points in $\mc Z$ is analyzed for a subclass of imitation dynamics, it is proven that all the points in $\mc B$ are unstable, whereas a subset of the points in $\mc N$, coinciding with the local maximizers of $\Phi$ are stable. However, these two results deal with local stability and their mere combination is not sufficient to prove global asymptotic stability. On the contrary, our characterization of the instability of the rest points in $\mc B$ through the analysis of the value of the potential function in their neighborhood, paves the way for our main result, which characterizes the global asymptotic behavior of solutions of a broad class of imitation dynamics in potential population games. 

\begin{theorem}\label{main theorem}
Consider a potential population game with action set $\mc A$ and configuration space $\mc X$. 
Let $(x(t))_{t\ge0}$ be a solution of some imitation dynamics \eqref{eq:imitation-dynamics} satisfying \eqref{assumption:fij} and $$\mc S=\{i\in\mc A:\,x_i(0)>0\}$$ be the support of the initial configuration. Then, 
$$\lim_{t\to+\infty}\dist(x(t),\mc N_{\mc S})=0\,.$$ In particular, if $x_i(0)>0$ for every $i\in\mc A$, then $x(t)$ converges to the set $\mc N$ of Nash equilibria. 
\end{theorem}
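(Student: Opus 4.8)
The plan is to use the potential $\Phi$ as a Lyapunov function and to combine LaSalle's invariance principle with Lemma~\ref{lemma:border-potential}, the mass condition \eqref{eq:xovx} being exactly the point where I must be careful.

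First I would reduce to the full-support case. By point~(i) of Lemma~\ref{lemma:imitations-properties} the trajectory remains in the face $\mc X_{\mc S}$, and by point~(ii) it stays in its relative interior, i.e.\ $x_i(t)>0$ for every $i\in\mc S$ and all $t\ge0$. The sub-game obtained by restricting the action set to $\mc S$ is itself a potential population game, with potential $\Phi$ restricted to $\mc X_{\mc S}$ and reward differences still governed by \eqref{potential} for $i,j\in\mc S$; moreover the restriction of \eqref{eq:imitation-dynamics} to $\mc X_{\mc S}$ is a valid imitation dynamics satisfying \eqref{assumption:fij}, and its Nash set is precisely $\mc N_{\mc S}$. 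Hence there is no loss of generality in assuming $\mc S=\mc A$, so that the trajectory lies in the interior of $\mc X$ and the claim reduces to showing $\dist(x(t),\mc N)\to0$.

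Next I would set up the Lyapunov argument. The right-hand side of \eqref{eq:imitation-dynamics} is Lipschitz on the compact set $\mc X$, so the solution is unique and forward-complete, and its $\omega$-limit set $\Omega$ is nonempty, compact, and invariant, with $\dist(x(t),\Omega)\to0$. By Lemma~\ref{lemma:Lyapunov} the map $t\mapsto\Phi(x(t))$ is nondecreasing and bounded, hence converges to some value $\Phi^*$; since every point of $\Omega$ is a limit $x(s_k)\to y$, continuity gives $\Phi\equiv\Phi^*$ on $\Omega$. By LaSalle's invariance principle---equivalently, differentiating $\Phi$ along a trajectory that stays in the invariant set $\Omega$ and using the equality case of Lemma~\ref{lemma:Lyapunov}, which identifies $\{\dot\Phi=0\}$ with $\mc Z$---one obtains $\Omega\subseteq\mc Z$.

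The crux is to rule out the critical configurations that are not Nash equilibria, i.e.\ to prove $\Omega\cap\mc B=\emptyset$; combined with $\Omega\subseteq\mc Z=\mc N\cup\mc B$ this yields $\Omega\subseteq\mc N$ and hence the theorem. Suppose $\ov x\in\Omega\cap\mc B$ and choose $t_n\to\infty$ with $x(t_n)\to\ov x$. Apply Lemma~\ref{lemma:border-potential} at $\ov x$ to obtain $\eps>0$; for $n$ large one has $\|x(t_n)-\ov x\|<\eps$, and since after the reduction $\mc S=\mc A$ any action $i$ attaining $r_i(\ov x)=r_*(\ov x)$ belongs to $\mc S$ and the trajectory is interior, we have $x_i(t_n)>0$, so the second condition in \eqref{eq:xovx} holds as well. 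Lemma~\ref{lemma:border-potential} then gives $\Phi(x(t_n))>\Phi(\ov x)=\Phi^*$, contradicting $\Phi(x(t))\le\Phi^*$ for all $t$ (monotone convergence). I expect the main obstacle to be exactly this verification of the mass condition in \eqref{eq:xovx}: it is where the reduction to the support $\mc S$ is indispensable, since it guarantees that a reward-maximizing action over the \emph{relevant} action set carries positive mass along the interior trajectory via point~(ii) of Lemma~\ref{lemma:imitations-properties}; without it the global maximizer could lie outside $\mc S$, the hypothesis of Lemma~\ref{lemma:border-potential} would fail, and the contradiction would not close.
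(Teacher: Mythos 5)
Your proposal is correct and follows essentially the same route as the paper's proof: reduce to the full-support case via Lemma~\ref{lemma:imitations-properties}, use $\Phi$ as a Lyapunov function to show convergence to $\mc Z$, and then exclude $\mc B$ by combining Lemma~\ref{lemma:border-potential} with the monotonicity of $t\mapsto\Phi(x(t))$. The only cosmetic difference is that you package the first convergence step as LaSalle applied to the $\omega$-limit set, whereas the paper argues directly that $\dot\Phi(x(t))\to0$; your careful verification of the mass condition in \eqref{eq:xovx} via part~(ii) of Lemma~\ref{lemma:imitations-properties} matches the paper's argument exactly.
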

\begin{proof} 
By Lemma \ref{lemma:imitations-properties} part (i) there is no loss of generality in assuming that $\mc S=\mc A$, i.e., $x_i(0)>0$ for every $i\in\mc A$. 
Let $r(x)$ be the reward vector function of the considered population game and let $\Phi(x)$ be a potential. Observe that $\Phi(x)$ is continuous over the compact configuration space $\mc X$, so that $$\Delta=\max\limits_{x\in\mc X}\Phi(x)-\min\limits_{x\in\mc X}\Phi(x)<+\infty\,.$$
Then, for every $t\ge0$ we have that 
$$\int_{0}^t\dot\Phi(x(s))\de s=\Phi(x(t))-\Phi(x(0))\le\Delta<+\infty$$
Since $\dot\Phi(x)\ge0$ for every $x\in\mc X$ by Lemma \ref{lemma:Lyapunov}, the above implies that $$\lim_{t\to+\infty}\dot\Phi(x(t))=0\,.$$
Then, continuity of $\dot\Phi(x)$ and the second part of Lemma \ref{lemma:Lyapunov} imply that $x(t)$ converges to the set $\mc Z$, as $t$ grows.

We are now left with proving that every solution $x(t)$ of an imitation dynamics with $x_i(0)>0$ for every $i\in\mc A$ approaches the subset $\mc N\subseteq\mc Z$ of Nash equilibria. 
By contradiction, let us assume that $\exists\,\eps>0$ such that $\forall\,t^*>0$ there exists some $t\ge t^*$ such that $\dist(x(t),\mc N)\ge\eps$. Since $x(t)$ approaches $\mc Z$ as $t$ grows large, this implies that for every $\eps>0$ and every large enough $t^*$   there exists  $t\ge t^*$ such that $\dist(x(t),\mc B)<\eps$. It follows that there exists a sequence of times $t_1\le t_2\le\ldots$ such that $\dist(x(t_n),\mc B)\stackrel{{n\to+\infty}}{\longrightarrow}0$. Since the configuration space $\mc X$ is compact, we may extract a converging subsequence $x(t_{n_k})$ with limit $\ov x\in\mc B$. Now, observe that Lemma \ref{lemma:imitations-properties} part (ii) implies that $x_i(t)>0$ for every action $i\in\mc A$. Then, Lemma \ref{lemma:border-potential} implies that there exists $k_0\ge1$ such that 
$$\Phi(x(t_k))>\Phi(\ov x)\,,\qquad \forall k\ge k_0\,.$$ Hence, the fact that $\Phi(x(t_k))$ is never decreasing as stated in Lemma \ref{lemma:Lyapunov}, would lead to 
$$\Phi(\ov x)=\lim_{k\to+\infty}\Phi(x(t_{n_k}))\ge\Phi(x(t_{n_{k_0}}))>\Phi(\ov x)\,,$$
a contradiction. Hence, $\lim\limits_{t\to+\infty}\dist(x(t),\mc N)=0$.  
\end{proof}

\section{Examples}\label{sec:examples}
In this section we present some applications of the results from Section \ref{sec:asymptotics}. For the imitation dynamics from Example \ref{eq:stochastic} (with all $K_{ij}$ sampled from independently and uniformly from $[0,1]$), we compare the analytical results obtained from Theorem \ref{main theorem} with some numerical simulations of the dynamics, in order to corroborate our theoretical results.

\subsection{Linear reward population games}\label{sec:binary}

We present some examples of binary games as in Example \ref{ex:linear-reward} and of pure coordination games.

\begin{example}[Binary linear reward games]\label{ex:binary}
First of all, it is straightforward to prove that all binary games are potential games. In fact, from a $2\times 2$ reward matrix $R$, as defined in \eqref{eq:R-binary},
we can immediately obtain a potential function, that is
\be\label{potential 2d}\phi(x)=\frac{1}{2}\left((a-c)x_1^2+(d-b)x_2^2\right).\ee

Notice that is not true that a generic linear reward game is potential, for example a ternary game such as Rock-Scissors-Paper is known to be not a potential game \cite{Sandholm2010}. 

In the following, three short examples of binary linear potential population games will be presented. Let us consider the following three reward matrices:
\be\label{matrices}
R^{(1)}=\left[\ba{ll}10 &0 \\ 8 &7\ea\right]\; R^{(2)}=\left[\ba{ll}0 &7 \\ 2 &6\ea\right]\; R^{(3)}=\left[\ba{ll}2&0\\3&1\ea\right],
\ee
Matrix $R^{(1)}$ leads to a coordination game. Trajectories converge to one of the three Nash equilibria: the global minimum of the potential function, attained in an interior point $\bar x$, and the two vertices of the simplex. Moreover, from Lemma \ref{lemma:Lyapunov}, we deduce that all trajectories with $x_1(0)<\bar x_1$ converge to $(0,1)$, all trajectories with $x_1(0)>\bar x_1$ converge to $(1,0)$, whereas $\bar x$ is an unstable equilibrium.

Matrix $R^{(2)}$ leads to an anti-coordination game, where the Nash equilibrium $\bar x$ is unique and it is an interior point. Therefore, if the support of the initial condition is $\mc A$, then Theorem \ref{main theorem} guarantees convergence to it.

Matrix $R^{(3)}$ leads to a game with a dominated action. In this case, the potential is a monotone increasing function in $x_2$. Therefore its maximum is attained in $\delta^{(2)}$, that is the only Nash equilibrium. Theorem \ref{main theorem} guarantees all trajectories with $\x_2(0)>0$ to converge to it.
Fig.~\ref{potentials} shows the plot of the potential functions of the three games and Fig.~\ref{2d} shows examples of trajectories of the imitation dynamics \eqref{eq:stochastic}.

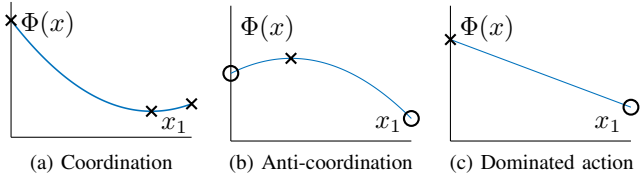
\begin{figure}
\centering
\subfloat[Coordination]{
%
%
\definecolor{mycolor1}{rgb}{0.00000,0.44700,0.74100}%
\definecolor{mycolor2}{rgb}{0.85000,0.32500,0.09800}%
\begin{tikzpicture}

\begin{axis}[%
 axis lines=middle,
 x   axis line style={-},
y   axis line style={-},
ytick style={draw=none},
ymajorticks=false,
xmajorticks=false,
width=\l cm,
height=\h cm,
at={(0.741in,0.457in)},
scale only axis,
xmin=0,
xmax=1,
xlabel={$x_1$},
ymin=0,
ymax=8.1,
ylabel={$\Phi(x)$},
axis background/.style={fill=white},
every axis x label/.style={
    at={(.78,.09)},
    anchor=west,
},
]

\addplot [color=mycolor1,solid,forget plot,semithick]
  table[row sep=crcr]{%
0	7\\
0.01	6.8609\\
0.02	6.7236\\
0.03	6.5881\\
0.04	6.4544\\
0.05	6.3225\\
0.06	6.1924\\
0.07	6.0641\\
0.08	5.9376\\
0.09	5.8129\\
0.1	5.69\\
0.11	5.5689\\
0.12	5.4496\\
0.13	5.3321\\
0.14	5.2164\\
0.15	5.1025\\
0.16	4.9904\\
0.17	4.8801\\
0.18	4.7716\\
0.19	4.6649\\
0.2	4.56\\
0.21	4.4569\\
0.22	4.3556\\
0.23	4.2561\\
0.24	4.1584\\
0.25	4.0625\\
0.26	3.9684\\
0.27	3.8761\\
0.28	3.7856\\
0.29	3.6969\\
0.3	3.61\\
0.31	3.5249\\
0.32	3.4416\\
0.33	3.3601\\
0.34	3.2804\\
0.35	3.2025\\
0.36	3.1264\\
0.37	3.0521\\
0.38	2.9796\\
0.39	2.9089\\
0.4	2.84\\
0.41	2.7729\\
0.42	2.7076\\
0.43	2.6441\\
0.44	2.5824\\
0.45	2.5225\\
0.46	2.4644\\
0.47	2.4081\\
0.48	2.3536\\
0.49	2.3009\\
0.5	2.25\\
0.51	2.2009\\
0.52	2.1536\\
0.53	2.1081\\
0.54	2.0644\\
0.55	2.0225\\
0.56	1.9824\\
0.57	1.9441\\
0.58	1.9076\\
0.59	1.8729\\
0.6	1.84\\
0.61	1.8089\\
0.62	1.7796\\
0.63	1.7521\\
0.64	1.7264\\
0.65	1.7025\\
0.66	1.6804\\
0.67	1.6601\\
0.68	1.6416\\
0.69	1.6249\\
0.7	1.61\\
0.71	1.5969\\
0.72	1.5856\\
0.73	1.5761\\
0.74	1.5684\\
0.75	1.5625\\
0.76	1.5584\\
0.77	1.5561\\
0.78	1.5556\\
0.79	1.5569\\
0.8	1.56\\
0.81	1.5649\\
0.82	1.5716\\
0.83	1.5801\\
0.84	1.5904\\
0.85	1.6025\\
0.86	1.6164\\
0.87	1.6321\\
0.88	1.6496\\
0.89	1.6689\\
0.9	1.69\\
0.91	1.7129\\
0.92	1.7376\\
0.93	1.7641\\
0.94	1.7924\\
0.95	1.8225\\
0.96	1.8544\\
0.97	1.8881\\
0.98	1.9236\\
0.99	1.9609\\
1	2\\
};

\addplot[only marks,mark=x,mark options={scale=1.4},text mark as node=true,thick] coordinates {(0.7778,1.5556)};
\addplot[only marks,mark=x,mark options={scale=1.4},text mark as node=true,black,thick] coordinates {(0,7)};
\addplot[only marks,mark=x,mark options={scale=1.4},text mark as node=true,black,thick] coordinates {(1,2)};

\end{axis}
\end{tikzpicture}
%
%
\definecolor{mycolor1}{rgb}{0.00000,0.44700,0.74100}%
\definecolor{mycolor2}{rgb}{0.85000,0.32500,0.09800}%
\begin{tikzpicture}

\begin{axis}[%
 axis lines=middle,
 x   axis line style={-},
y   axis line style={-},
ytick style={draw=none},
ymajorticks=false,
xmajorticks=false,
width=\l cm,
height=\h  cm,
at={(0.741in,0.457in)},
scale only axis,
xmin=0,
xmax=1,
xlabel={$x_1$},
ymin=0,
ymax=3,
ylabel={$\Phi(x)$},
]

\addplot [color=mycolor1,solid,forget plot]
  table[row sep=crcr]{%
0	1.5\\
0.01	1.5197\\
0.02	1.5388\\
0.03	1.5573\\
0.04	1.5752\\
0.05	1.5925\\
0.06	1.6092\\
0.07	1.6253\\
0.08	1.6408\\
0.09	1.6557\\
0.1	1.67\\
0.11	1.6837\\
0.12	1.6968\\
0.13	1.7093\\
0.14	1.7212\\
0.15	1.7325\\
0.16	1.7432\\
0.17	1.7533\\
0.18	1.7628\\
0.19	1.7717\\
0.2	1.78\\
0.21	1.7877\\
0.22	1.7948\\
0.23	1.8013\\
0.24	1.8072\\
0.25	1.8125\\
0.26	1.8172\\
0.27	1.8213\\
0.28	1.8248\\
0.29	1.8277\\
0.3	1.83\\
0.31	1.8317\\
0.32	1.8328\\
0.33	1.8333\\
0.34	1.8332\\
0.35	1.8325\\
0.36	1.8312\\
0.37	1.8293\\
0.38	1.8268\\
0.39	1.8237\\
0.4	1.82\\
0.41	1.8157\\
0.42	1.8108\\
0.43	1.8053\\
0.44	1.7992\\
0.45	1.7925\\
0.46	1.7852\\
0.47	1.7773\\
0.48	1.7688\\
0.49	1.7597\\
0.5	1.75\\
0.51	1.7397\\
0.52	1.7288\\
0.53	1.7173\\
0.54	1.7052\\
0.55	1.6925\\
0.56	1.6792\\
0.57	1.6653\\
0.58	1.6508\\
0.59	1.6357\\
0.6	1.62\\
0.61	1.6037\\
0.62	1.5868\\
0.63	1.5693\\
0.64	1.5512\\
0.65	1.5325\\
0.66	1.5132\\
0.67	1.4933\\
0.68	1.4728\\
0.69	1.4517\\
0.7	1.43\\
0.71	1.4077\\
0.72	1.3848\\
0.73	1.3613\\
0.74	1.3372\\
0.75	1.3125\\
0.76	1.2872\\
0.77	1.2613\\
0.78	1.2348\\
0.79	1.2077\\
0.8	1.18\\
0.81	1.1517\\
0.82	1.1228\\
0.83	1.0933\\
0.84	1.0632\\
0.85	1.0325\\
0.86	1.0012\\
0.87	0.9693\\
0.88	0.9368\\
0.89	0.9037\\
0.9	0.87\\
0.91	0.8357\\
0.92	0.8008\\
0.93	0.7653\\
0.94	0.7292\\
0.95	0.6925\\
0.96	0.6552\\
0.97	0.6173\\
0.98	0.5788\\
0.99	0.5397\\
1	0.5\\
};

\addplot[only marks,mark=x,mark options={scale=1.4},text mark as node=true,black,thick] coordinates {(0.3333,1.8333)};
\addplot[only marks,mark=o,mark options={scale=1.4},text mark as node=true,thick] coordinates {(0,1.5)};
\addplot[only marks,mark=o,mark options={scale=1.4},text mark as node=true,thick] coordinates {(1,0.5)};

\end{axis}
\end{tikzpicture}
%
%
\definecolor{mycolor1}{rgb}{0.00000,0.44700,0.74100}%
\definecolor{mycolor2}{rgb}{0.85000,0.32500,0.09800}%
\begin{tikzpicture}

\begin{axis}[%
 axis lines=middle,
 x   axis line style={-},
y   axis line style={-},
ytick style={draw=none},
ymajorticks=false,
xmajorticks=false,
width=\l cm,
height=\h cm,
at={(0.741in,0.457in)},
scale only axis,
xmin=0,
xmax=1,
xlabel={$x_1$},
ymin=0,
ymax=4,
ylabel={$\Phi(x)$},
axis background/.style={fill=white},
]

\addplot [color=mycolor1,solid,forget plot]
  table[row sep=crcr]{%
0	3\\
0.01	2.98\\
0.02	2.96\\
0.03	2.94\\
0.04	2.92\\
0.05	2.9\\
0.06	2.88\\
0.07	2.86\\
0.08	2.84\\
0.09	2.82\\
0.1	2.8\\
0.11	2.78\\
0.12	2.76\\
0.13	2.74\\
0.14	2.72\\
0.15	2.7\\
0.16	2.68\\
0.17	2.66\\
0.18	2.64\\
0.19	2.62\\
0.2	2.6\\
0.21	2.58\\
0.22	2.56\\
0.23	2.54\\
0.24	2.52\\
0.25	2.5\\
0.26	2.48\\
0.27	2.46\\
0.28	2.44\\
0.29	2.42\\
0.3	2.4\\
0.31	2.38\\
0.32	2.36\\
0.33	2.34\\
0.34	2.32\\
0.35	2.3\\
0.36	2.28\\
0.37	2.26\\
0.38	2.24\\
0.39	2.22\\
0.4	2.2\\
0.41	2.18\\
0.42	2.16\\
0.43	2.14\\
0.44	2.12\\
0.45	2.1\\
0.46	2.08\\
0.47	2.06\\
0.48	2.04\\
0.49	2.02\\
0.5	2\\
0.51	1.98\\
0.52	1.96\\
0.53	1.94\\
0.54	1.92\\
0.55	1.9\\
0.56	1.88\\
0.57	1.86\\
0.58	1.84\\
0.59	1.82\\
0.6	1.8\\
0.61	1.78\\
0.62	1.76\\
0.63	1.74\\
0.64	1.72\\
0.65	1.7\\
0.66	1.68\\
0.67	1.66\\
0.68	1.64\\
0.69	1.62\\
0.7	1.6\\
0.71	1.58\\
0.72	1.56\\
0.73	1.54\\
0.74	1.52\\
0.75	1.5\\
0.76	1.48\\
0.77	1.46\\
0.78	1.44\\
0.79	1.42\\
0.8	1.4\\
0.81	1.38\\
0.82	1.36\\
0.83	1.34\\
0.84	1.32\\
0.85	1.3\\
0.86	1.28\\
0.87	1.26\\
0.88	1.24\\
0.89	1.22\\
0.9	1.2\\
0.91	1.18\\
0.92	1.16\\
0.93	1.14\\
0.94	1.12\\
0.95	1.1\\
0.96	1.08\\
0.97	1.06\\
0.98	1.04\\
0.99	1.02\\
1	1\\
};

\addplot[only marks,mark=x,mark options={scale=1.4},text mark as node=true,black,thick] coordinates {(0,3)};
\addplot[only marks,mark=o,mark options={scale=1.4},text mark as node=true,thick] coordinates {(1,1)};

\end{axis}
\end{tikzpicture}
\caption{Potentials of the games from Example \ref{ex:binary}. Crosses are Nash equilibria, circles are Nash equilibria for restricted games.}
\label{potentials}
\end{figure}

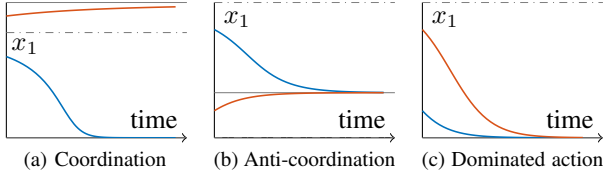
\begin{figure}
\centering
\subfloat[Coordination]{
%
%
\definecolor{mycolor1}{rgb}{0.00000,0.44700,0.74100}%
\definecolor{mycolor2}{rgb}{0.85000,0.32500,0.09800}%
\begin{tikzpicture}

\begin{axis}[%
 axis lines=middle,
 x   axis line style={->},
y   axis line style={-},
xtick style={draw=none},
xmajorticks=false,
ymajorticks=false,
width=\l  cm,
height=\h cm,
at={(0.741in,0.457in)},
scale only axis,
xmin=0,
xmax=16,
xlabel={time},
ymin=0,
ymax=1,
ylabel={$x_1$},
y label style={at={(.23,.55)},anchor=south east},
axis background/.style={fill=white},
]

\addplot [color=gray,forget plot,dashdotted]
  table[row sep=crcr]{%
0	0.7778\\
16	0.7778\\
};

\addplot [color=gray,forget plot]
  table[row sep=crcr]{%
0	1\\
16	1\\
};
\addplot [color=gray,forget plot]
  table[row sep=crcr]{%
0	0\\
16	0\\
};

\addplot [color=mycolor1,solid,forget plot, semithick]
  table[row sep=crcr]{%
0	0.6\\
0.375	0.587854152315965\\
0.75	0.574327160583299\\
1.125	0.559189297276333\\
1.5	0.542163428862961\\
1.875	0.522921531144345\\
2.25	0.501071314320051\\
2.625	0.476163816258384\\
3	0.447696907237842\\
3.375	0.415162397674111\\
3.75	0.378043347260584\\
4.125	0.336185568807423\\
4.5	0.289969467534735\\
4.875	0.240435953126461\\
5.25	0.190314312410416\\
5.625	0.142895936960525\\
6	0.10144218679145\\
6.17929857183361	0.0845659440391466\\
6.35859714366722	0.0697389392191834\\
6.53789571550082	0.0569587287823689\\
6.71719428733443	0.0461287071190648\\
6.89649285916804	0.037072933507803\\
7.07579143100165	0.0296046664829181\\
7.25509000283525	0.0235264365239166\\
7.43438857466886	0.0186220524446732\\
7.60380125981313	0.0148732392713943\\
7.77321394495739	0.0118480747988482\\
7.94262663010166	0.00942551924480847\\
8.11203931524592	0.00748829397932116\\
8.2619385720862	0.00609735146387477\\
8.41183782892647	0.00496058166991582\\
8.56173708576675	0.004035004252591\\
8.71163634260702	0.00328075223321611\\
8.85598890644812	0.00268506705446508\\
9.00034147028921	0.00219682144454616\\
9.14469403413031	0.00179769435564019\\
9.28904659797141	0.00147090627222816\\
9.43119466105946	0.00120624019980385\\
9.57334272414751	0.000989077097499046\\
9.71549078723556	0.00081128501005097\\
9.85763885032361	0.000665453604704595\\
9.99883595063323	0.000546182634858482\\
10.1400330509429	0.000448274692362665\\
10.2812301512525	0.000368066208289462\\
10.4224272515621	0.000302225941277697\\
10.5751482376012	0.000244005313455383\\
10.7278692236403	0.000197003308421787\\
10.8805902096795	0.000159170628673856\\
11.0333111957186	0.000128624199626843\\
11.2110487855808	0.000100210145591512\\
11.388786375443	7.80771110917306e-05\\
11.5665239653051	6.09506692278538e-05\\
11.7442615551673	4.76083932948579e-05\\
11.9528940508747	3.54697585733465e-05\\
12.1615265465821	2.64263200195596e-05\\
12.3701590422895	1.97966538550573e-05\\
12.5787915379969	1.48600238336305e-05\\
12.8302144024137	1.03776994924765e-05\\
13.0816372668305	7.2422717643738e-06\\
13.3330601312473	5.15084570737207e-06\\
13.5844829956641	3.69499297035053e-06\\
13.8967608115137	2.32238427132525e-06\\
14.2090386273634	1.44859886880036e-06\\
14.521316443213	9.85743011313304e-07\\
14.8335942590626	7.0221171363898e-07\\
14.8751956942969	6.6248157763588e-07\\
14.9167971295313	6.24999399039145e-07\\
14.9583985647656	5.89638172592153e-07\\
15	5.56277601123046e-07\\
};
\addplot [color=mycolor2,solid,forget plot, semithick]
  table[row sep=crcr]{%
0	0.9\\
0.375	0.903638439626736\\
0.75	0.90712837564807\\
1.125	0.910470228493903\\
1.5	0.91366574488509\\
1.875	0.916717828361863\\
2.25	0.919630186348126\\
2.625	0.922407125741202\\
3	0.925053432948752\\
3.375	0.927574269028678\\
3.75	0.929974947972696\\
4.125	0.932260862120661\\
4.5	0.934437426344045\\
4.875	0.936510026963475\\
5.25	0.938483912865734\\
5.625	0.940364188210956\\
6	0.942155795213555\\
6.375	0.943863494164823\\
6.75	0.945491820278015\\
7.125	0.947045101977466\\
7.5	0.948527461402542\\
7.875	0.949942809358124\\
8.25	0.951294833724744\\
8.625	0.952587023257969\\
9	0.953822674335305\\
9.375	0.95500489204754\\
9.75	0.95613659161301\\
10.125	0.957220520009432\\
10.5	0.958259263804434\\
10.875	0.959255252244968\\
11.25	0.960210762962174\\
11.625	0.961127939385092\\
12	0.962008797733028\\
12.375	0.962855230357326\\
12.75	0.963669012081406\\
13.125	0.964451813495156\\
13.5	0.965205206614833\\
13.875	0.965930667844275\\
14.25	0.966629583625204\\
14.625	0.967303260346848\\
15	0.967952928726937\\
};

\end{axis}
\end{tikzpicture}
%
%
\definecolor{mycolor1}{rgb}{0.00000,0.44700,0.74100}%
\definecolor{mycolor2}{rgb}{0.85000,0.32500,0.09800}%
\begin{tikzpicture}

\begin{axis}[%
 axis lines=middle,
 x   axis line style={->},
y   axis line style={-},
xtick style={draw=none},
xmajorticks=false,
ymajorticks=false,
width=\l  cm,
height=\h cm,
at={(0.741in,0.457in)},
scale only axis,
xmin=0,
xmax=16,
xlabel={time},
ymin=0,
ymax=1,
ylabel={$x_1$},
axis background/.style={fill=white},
]

\addplot [color=gray,forget plot]
  table[row sep=crcr]{%
0	0.3333333\\
16	0.3333333\\
};
\addplot [color=gray,dashdotted,forget plot]
  table[row sep=crcr]{%
0	1\\
16	1\\
};
\addplot [color=gray,dashdotted,forget plot]
  table[row sep=crcr]{%
0	0\\
16	0\\
};

\addplot [color=mycolor1,solid,forget plot, semithick]
  table[row sep=crcr]{%
0	0.8\\
0.375	0.782246875254846\\
0.75	0.762500537434247\\
1.125	0.740700864865957\\
1.5	0.716881500818344\\
1.875	0.691186730592521\\
2.25	0.663936819777443\\
2.625	0.635595534647423\\
3	0.606754413301736\\
3.375	0.57809574939656\\
3.75	0.550300183078833\\
4.125	0.523951304720861\\
4.5	0.499497642919182\\
4.875	0.477236795833303\\
5.25	0.457285845241629\\
5.625	0.439646041069562\\
6	0.424218356218975\\
6.375	0.41082609724509\\
6.75	0.399278222398504\\
7.125	0.389374896978457\\
7.5	0.380910849780114\\
7.875	0.373687639183361\\
8.25	0.367541562453402\\
8.625	0.362323436456965\\
9	0.357894820889633\\
9.375	0.354134162851307\\
9.75	0.350946793126413\\
10.125	0.34824851789897\\
10.5	0.345962961806353\\
10.875	0.344024746249163\\
11.25	0.342383771067007\\
11.625	0.340995706640914\\
12	0.33982057567876\\
12.375	0.338824390997534\\
12.75	0.337981223715517\\
13.125	0.337268159721744\\
13.5	0.336664566884047\\
13.875	0.336152936829198\\
14.25	0.335719928359632\\
14.625	0.335353755225233\\
15	0.335043809355178\\
};
\addplot [color=mycolor2,solid,forget plot, semithick]
  table[row sep=crcr]{%
0	0.2\\
0.196241127461686	0.209789842380964\\
0.392482254923373	0.219049350670214\\
0.588723382385059	0.227764769379239\\
0.784964509846745	0.235933857171313\\
1.15996450984675	0.250058735658594\\
1.53496450984675	0.262337765120444\\
1.90996450984675	0.272926006897397\\
2.28496450984675	0.282006463552122\\
2.65996450984675	0.289773655120083\\
3.03496450984675	0.296393508622826\\
3.40996450984675	0.302020250580329\\
3.78496450984675	0.306798984261352\\
4.15996450984674	0.310858838349881\\
4.53496450984674	0.314301004581007\\
4.90996450984674	0.317215699251859\\
5.28496450984674	0.319684953179719\\
5.65996450984674	0.3217791655033\\
6.03496450984674	0.323552357184634\\
6.40996450984674	0.325052347217004\\
6.78496450984674	0.326322280661567\\
7.15996450984674	0.32739885806691\\
7.53496450984674	0.328310087323475\\
7.90996450984674	0.32908072205099\\
8.28496450984674	0.329733053195693\\
8.65996450984674	0.330285999250083\\
9.03496450984674	0.330753977047663\\
9.40996450984674	0.331149723379109\\
9.78496450984674	0.33148470215249\\
10.1599645098467	0.331768637148413\\
10.5349645098467	0.332008935578941\\
10.9099645098467	0.332212140814674\\
11.2849645098467	0.33238414152927\\
11.6599645098467	0.332529931774454\\
12.0349645098467	0.332653315455033\\
12.4099645098467	0.33275765277028\\
12.7849645098467	0.332845967611068\\
13.1599645098467	0.332920824367461\\
13.5349645098467	0.332984176251388\\
13.9099645098467	0.33303774863101\\
14.2849645098467	0.333083094175546\\
14.4637233823851	0.333102206385647\\
14.6424822549234	0.333119858748892\\
14.8212411274617	0.333136162417919\\
15	0.333151220926029\\
};

\end{axis}
\end{tikzpicture}
%
%
\definecolor{mycolor1}{rgb}{0.00000,0.44700,0.74100}%
\definecolor{mycolor2}{rgb}{0.85000,0.32500,0.09800}%
\begin{tikzpicture}

\begin{axis}[%
 axis lines=middle,
 x   axis line style={->},
y   axis line style={-},
xtick style={draw=none},
xmajorticks=false,
ymajorticks=false,
width=\l cm,
height=\h cm,
at={(0.741in,0.457in)},
scale only axis,
xmin=0,
xmax=9,
xlabel={time},
ymin=0,
ymax=1,
ylabel={$x_1$},
axis background/.style={fill=white},
]

\addplot [color=gray,forget plot, dashdotted]
  table[row sep=crcr]{%
0	1\\
9	1\\
};

\addplot [color=gray,forget plot]
  table[row sep=crcr]{%
0	0\\
9	0\\
};

\addplot [color=mycolor1,solid,forget plot, semithick]
  table[row sep=crcr]{%
0	0.2\\
0.0627971607877395	0.190141420966944\\
0.125594321575479	0.180659043504348\\
0.188391482363218	0.171549396094173\\
0.251188643150958	0.162807842158672\\
0.451188643150958	0.137344698999804\\
0.651188643150958	0.115310914675103\\
0.851188643150958	0.0964243011882164\\
1.05118864315096	0.0803562885294334\\
1.25118864315096	0.0667555066951454\\
1.45118864315096	0.055316861235003\\
1.65118864315096	0.0457515999582237\\
1.85118864315096	0.0377778920061049\\
2.05118864315096	0.0311378220108504\\
2.25118864315096	0.0256337844503955\\
2.45118864315096	0.0210885316361108\\
2.65118864315096	0.0173367190825872\\
2.85118864315096	0.0142358208251876\\
3.05118864315096	0.0116832121064848\\
3.25118864315096	0.00958787798589877\\
3.45118864315096	0.0078662039649815\\
3.65118864315096	0.00644799013954915\\
3.85118864315096	0.00528427957487145\\
4.05118864315096	0.00433169223078775\\
4.25118864315096	0.00355060763048142\\
4.4495841042762	0.00291284186428498\\
4.64797956540145	0.00238944701850369\\
4.8463750265267	0.00196082496370183\\
5.04477048765194	0.00160913503893626\\
5.24215854945918	0.00132096967319656\\
5.43954661126641	0.00108439561195135\\
5.63693467307364	0.000890562032717785\\
5.83432273488088	0.000731424730578565\\
6.03127735038799	0.000600606770371298\\
6.22823196589509	0.00049319423559615\\
6.4251865814022	0.00040516936953149\\
6.62214119690931	0.000332883152330183\\
6.82214119690931	0.000272489446995377\\
7.02214119690931	0.000223059839948552\\
7.22214119690931	0.000182687497348196\\
7.42214119690931	0.000149637995225258\\
7.56660589768198	0.00012950682998358\\
7.71107059845466	0.000112084898777454\\
7.85553529922733	9.70135147166901e-05\\
8	8.39693063616152e-05\\
};
\addplot [color=mycolor2,solid,forget plot, semithick]
  table[row sep=crcr]{%
0	0.8\\
0.2	0.766079113622929\\
0.4	0.728354581186598\\
0.6	0.687030909906665\\
0.8	0.642509029896568\\
1	0.595378388321197\\
1.2	0.546431569730729\\
1.4	0.496571846675447\\
1.6	0.446769583953069\\
1.8	0.39801578829588\\
2	0.351209365443602\\
2.2	0.307098059597752\\
2.4	0.266249897235944\\
2.6	0.229041142523597\\
2.8	0.19564081756887\\
3	0.166064718388434\\
3.2	0.140186535556139\\
3.4	0.117762049582873\\
3.6	0.0985098107047544\\
3.8	0.0821204410539077\\
4	0.0682570374991805\\
4.2	0.0565772808512247\\
4.4	0.0467947986409069\\
4.6	0.0386436306943371\\
4.8	0.0318680392701534\\
5	0.0262381281983038\\
5.2	0.0215807796692446\\
5.4	0.0177413957474716\\
5.6	0.0145764524184118\\
5.8	0.0119631874063856\\
6	0.00981400173226846\\
6.2	0.00805125194738344\\
6.4	0.00660373484354001\\
6.6	0.00541187798259405\\
6.8	0.00443432190435516\\
7	0.00363441359981641\\
7.2	0.00297870079582192\\
7.4	0.00243947232101753\\
7.6	0.00199774189559099\\
7.8	0.00163667099754229\\
8	0.00134092225414413\\
};

\end{axis}
\end{tikzpicture}
\caption{Trajectories of the imitation dynamics \eqref{eq:stochastic} for the games from Example \ref{ex:binary}. Solid lines are asymptotically stable equilibria, dotted lines are unstable.}
\label{2d}
\end{figure}
\end{example}

\begin{example}[Pure coordination games]\label{ex:coordination}
Another class of potential games are the linear reward pure coordination games \cite{Cooper1999}, in which the reward matrix $R$ is a diagonal positive (entry-wise) matrix. It is straightforward to check that a potential function is given by
\be\Phi(x)=\frac{1}{2}\sum_{i=1}^m R_{ii}x_i^2.\ee

Being $\Phi(x)$ convex, its minimum is attained in an interior point $\bar x$, and all the vertices of the simplex are local maxima of the potential function. All the other critical points are minima of the potential subject to belong to the boundaries. All of these points are Nash equilibria. Therefore Theorem \ref{main theorem} guarantees asymptotic convergence to them. In the ternary case $\abs{\mc A}=3$, a complete analysis can be carried out. Without any loss in generality, we can set $R_{11}=1$ and name $R_{22}=b$ and $R_{33}=c$. Then, analyzing $\Phi(x)=\frac{1}{2}\left(x_1^2+bx_2^2+cx_3^2\right)$, we explicitly compute the seven Nash equilibria: $\delta^{(1)}$, $\delta^{(2)}$, and $\delta^{(3)}$, the global minimum of the potential
$$
\bar x=\left(\frac{bc}{b+c+bc},\frac{c}{b+c+bc},\frac{b}{b+c+bc}\right),$$
and the three minima on the boundary of $\mc X$:
$$\bar x^{(1)}=\displaystyle\frac{(0,c,b)}{b+c},\quad\bar x^{(2)}=\frac{(c,0,1)}{c+1},\quad\bar x^{(3)}=\frac{(b,1,0)}{b+1}.
$$
Through Lemma \ref{lemma:Lyapunov} we conclude that $\bar x$ is an unstable node, the three points on the boundaries $\bar x^{(1)}$, $\bar x^{(2)}$, and $\bar x^{(3)}$ are saddle points, whose stable manifolds actually divide the basins of attraction of the three asymptotically stable nodes $\delta^{(1)}$, $\delta^{(2)}$, $\delta^{(3)}$. Fig. \ref{fig:coordination} shows two examples of potential and velocity plots of the imitation dynamics \eqref{eq:stochastic} for these games.
\begin{figure}
\centering
\subfloat[$b=2$, $c=3$]{\includegraphics[scale=.16]{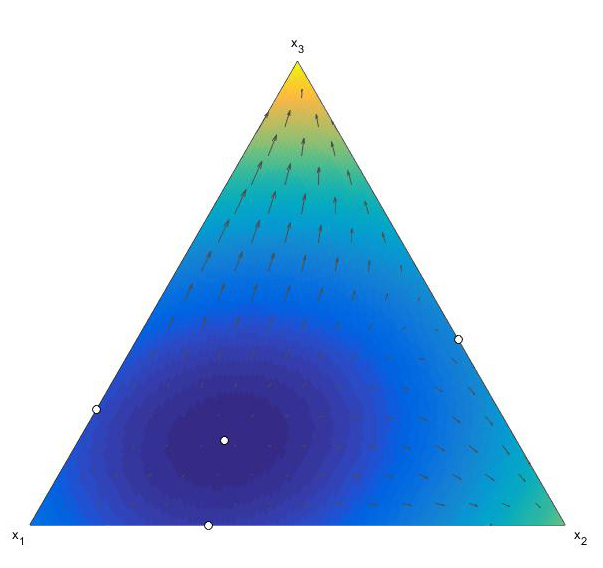}}\quad\subfloat[$b=0.2$, $c=5$]{\includegraphics[scale=.16]{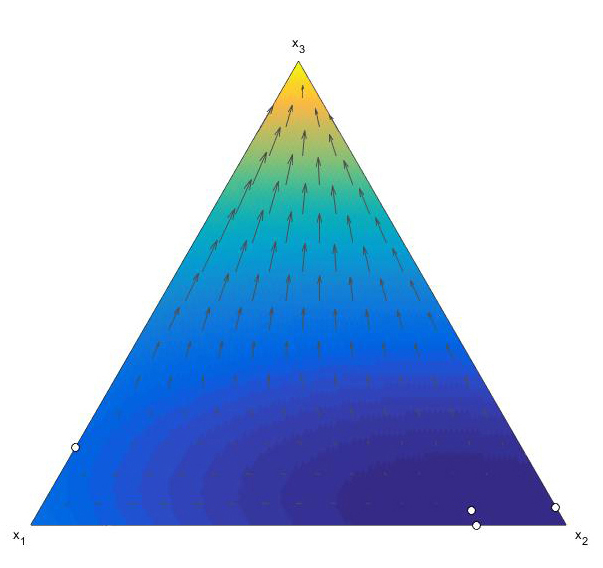}}
\caption{Potential of the pure coordination games from Example \ref{ex:coordination} and velocity plot of imitation dynamics \eqref{eq:stochastic} from Example \ref{ex:stochastic} for them. The unstable nodes and the saddle points are denoted by white circles.}
\label{fig:coordination}
\end{figure}
\end{example}

\subsection{Congestion games}

Another important class of potential games are congestion games \cite{Rosenthal1973, Monderer1996}. Let $\mc A=\{1,\dots, l\}$ be a set of resources and $A\in\{0,1\}^{l\times m}$ be the adjacency matrix of a bipartite graph connecting agents with resources. Let us introduce $l$ continuous functions, collected in a vector $\psi(\cdot)=(\psi_1(\cdot),\dots,\psi_l(\cdot))$, where the generic $\psi_k(y)$ is the reward for agents that use resource $k$, when the resource is used by a fraction $y$ of agents. The reward vector functions for these games are simply $r(x)=A^T\psi(Ax)$ and a straightforward computation shows that congestion games are always potential games, with \be\Phi(x)=\sum_{k=1}^{k=l}\Psi_k((Ax)_k),\ee
where $\Psi_k$ is an anti-derivative of $\psi_k$. 

Often, the functions $\psi_k$s represent a cost for the use of the resources, so they are monotone decreasing functions. In this case, the potential function $\Phi(x)$ is concave, possessing a global maximum $\bar x$, that is the only Nash equilibrium of the game. Depending on $A$, $\bar x$ can be an interior point, or it can belong to the boundary of the simplex. As the other critical points are considered, $\delta^{(i)}$ are minima of the potential, whereas local maxima are present on the boundary, that are Nash equilibria for restricted games. Theorem \ref{main theorem} guarantees therefore that trajectories with $x(0)>0$ (entry-wise) converge to $\bar x$, that is an asymptotically stable node. The Nash equilibria of the restricted games are saddle points (i.e., stable on the respective boundaries)  and the vertices that are not in one of the previous set are unstable nodes. Fig. \ref{fig:congestion} shows the velocity plots of the imitation dynamics for the following examples of congestion games.

\begin{example}[Exponential costs game]\label{ex:exponential}
Let $A=I$ and let the cost be $\psi_i(x_i)=\exp(c_1x_1)$, for $c_i>0$. Then, the maximum of the potential 
$\Phi(x)=-\sum_{i=1}^{m}\frac{1}{c_i}\exp(-c_ix_i)$
is achieved in an interior point $\bar x$, that is the unique Nash equilibrium of the game.
\end{example}

\begin{figure}
\centering
\subfloat[Ex. \ref{ex:exponential}]{\includegraphics[scale=.18]{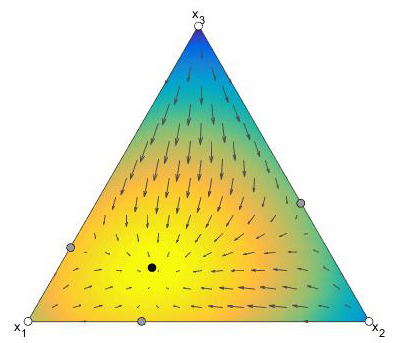}}\quad\subfloat[\eqref{eq:phi1} from Ex. \ref{ex:dominated}]{\includegraphics[scale=.18]{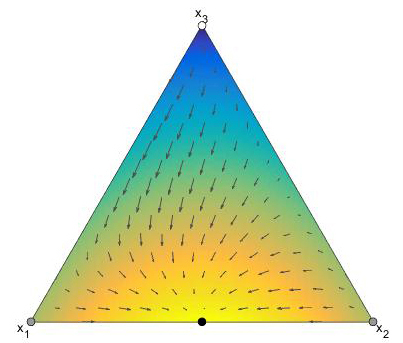}}\quad\subfloat[\eqref{eq:phi2} from Ex. \ref{ex:dominated}]{\includegraphics[scale=.18]{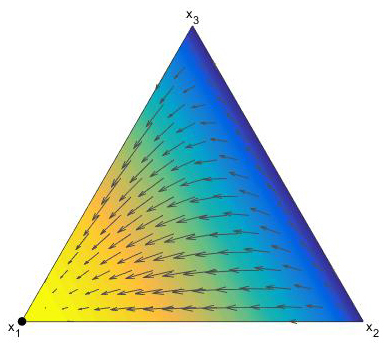}}
\caption{Potential of the congestion games from Example \ref{ex:exponential} (with $m=3$, $c_1=1$, $c_2=2$, and $c_3=3$) and from Example \ref{ex:dominated}, respectively, and velocity plot of imitation dynamics \eqref{eq:stochastic} from Example \eqref{ex:stochastic} for them. The unstable nodes are denoted by white circles, saddle points by gray circles, and black circles denote the only asymptotically stable equilibrium.}
\label{fig:congestion}
\end{figure}

\begin{example}[Dominated strategy]\label{ex:dominated}
We construct now two examples of congestion games in which the Nash equilibrium of the dynamics is on the boundary. Let $l=2$, $\psi_i(y)=-y$, and let us consider the following two adjacency matrices:
\be A_1=\left( \begin{array}{ccc}
1 & 0 & 1 \\
0 & 1 & 1 \end{array} \right)\quad\quad A_2=\left( \begin{array}{ccc}
1 & 1 & 1 \\
0 & 1 & 1 \end{array} \right).
\ee
The potential functions are, respectively:
\be\label{eq:phi1}\Phi_1(x)=-\frac{1}{2}\left((x_1+x_3)^2+(x_2+x_3)^2 \right)\ee\be\label{eq:phi2}\Phi_2(x)=-\frac{1}{2}(x_2+x_3)^2.\ee
As $A_1$ is considered, the Nash equilibrium of the dynamics is in $\bar x=(1/2,1/2,0)$, whereas $A_2$ has its Nash equilibrium in the vertex $\delta^{(1)}$. 
\end{example}

\section{Conclusion and Further Work}\label{sec:conclusions}

In this work we analyzed the asymptotic behavior of imitation dynamics in potential population games, proving convergence of the dynamics to the set of Nash equilibria of the sub-game restricted to the set of actions used in the initial configuration of the population. This results strengthen the state of the art, both ensuring global stability to the Nash equilibria, and generalizing the result to a class of dynamics that encompasses the replicator dynamics and the class of imitation dynamics considered in many previous works. 

The main research lines arising from this work point in two directions. On the one hand, taking advantage on the techniques developed in this work, our analysis has to be extended to the case in which the population is not fully mixed and agents interact on a non-complete communication network, similar to what have been done for other learning mechanisms, such as the replicator and logit choice \cite{Marden2012,Barreiro-Gomez2016}, or to cases in which the learning process interacts with the dynamics of a physical system \cite{Como2013}. On the other hand, stochasticity in the revising of the agent's opinion should be included into the imitation dynamics. This leads to model imitation dynamics with Markovian stochastic processes, paving the way for the study of several interesting open problems in the relationships between the asymptotic behavior of the new stochastic process and the one of the deterministic process analyzed in this work.

\section*{Appendix}
\setcounter{theorem}{1}
\begin{lemma}\label{lemma:imitations-properties=app}
For any imitation dynamics \eqref{eq:imitation-dynamics} satisfying \eqref{assumption:fij}: 
\begin{enumerate}
\item[(i)] if $x(0)\in\mc X_{\mc S}$ for some nonempty subset of actions $\mc S\subseteq\mc A$, then $x(t)\in\mc X_{\mc S}$ for all $t\ge0$; 
\item[(ii)] if $x_i(0)>0$ for some action $i\in\mc A$, then $x_i(t)>0$ for every $t\ge0$; 
\item[(iii)] every restricted Nash equilibrium $x\in\mc Z$ is a rest point. 
\end{enumerate}
\end{lemma}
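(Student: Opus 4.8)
The plan is to exploit the multiplicative structure of the imitation dynamics \eqref{eq:imitation-dynamics}, which I rewrite as $\dot x_i=x_i\,g_i(x)$ with $g_i(x):=\sum_{j\in\mc A}x_j\bigl(f_{ji}(x)-f_{ij}(x)\bigr)$. Since each $f_{ij}$ is Lipschitz on the compact simplex $\mc X$, every $g_i$ is Lipschitz and bounded there, so the vector field is Lipschitz and solutions exist and are unique. Before addressing (i)--(iii) I would first check that $\mc X$ is forward-invariant: summing the equations and relabelling $i\leftrightarrow j$ in one of the two resulting double sums gives $\sum_i\dot x_i=\sum_{i,j}x_ix_j\bigl(f_{ji}(x)-f_{ij}(x)\bigr)=0$, so $\sum_i x_i$ is conserved and stays equal to $1$; together with the sign-preservation established next, this confines the trajectory to $\mc X$ and makes it forward-complete.

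For parts (i) and (ii) the key observation is that, along any solution, the $i$-th scalar equation $\dot x_i=g_i(x(t))\,x_i$ is linear in $x_i$ with continuous, locally bounded coefficient $t\mapsto g_i(x(t))$. Variation of constants then gives $x_i(t)=x_i(0)\exp\bigl(\int_0^t g_i(x(s))\,\de s\bigr)$, and since the exponential factor is strictly positive and finite, $x_i(t)$ carries the same sign as $x_i(0)$ for all $t\ge0$. In particular $x_i(0)=0$ forces $x_i(t)\equiv0$, so if $x(0)\in\mc X_{\mc S}$ every coordinate outside $\mc S$ remains zero and $x(t)\in\mc X_{\mc S}$, proving (i); and $x_i(0)>0$ forces $x_i(t)>0$, proving (ii). Equivalently, (i) follows from uniqueness of solutions applied to the scalar equation, for which $x_i\equiv0$ is a solution whenever $x_i(0)=0$.

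For part (iii) I would compute $\dot x_i$ directly at a point $x\in\mc Z$, using the characterization \eqref{Z-equivalent} that $x_i>0\Rightarrow r_i(x)=\ov r(x)$. If $x_i=0$ then $\dot x_i=0$ trivially. If $x_i>0$, only indices $j$ with $x_j>0$ contribute to the sum defining $g_i(x)$; for each such $j$ one has $r_i(x)=r_j(x)=\ov r(x)$, hence $r_j(x)-r_i(x)=0$, and the sign condition \eqref{assumption:fij} yields $f_{ij}(x)=f_{ji}(x)$. Therefore every surviving summand vanishes, $g_i(x)=0$, and $\dot x_i=0$, so $x$ is a rest point.

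I expect no serious obstacle: the conceptual content lies entirely in the factor $x_i$ multiplying each equation, which makes every coordinate face invariant and sign-preserving, and in the defining sign hypothesis \eqref{assumption:fij}, which annihilates the drift exactly on $\mc Z$. The one point requiring care is the mild circularity between using the closed form for $x_i(t)$ and knowing that the trajectory remains in $\mc X$ so that $g_i$ stays bounded; I would resolve this by arguing on the maximal interval of existence, where sign-preservation and conservation of $\sum_i x_i$ jointly keep the solution in $\mc X$ and thus extend it to all $t\ge0$.
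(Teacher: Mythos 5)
Your proof is correct and follows essentially the same route as the paper's: part (iii) is the identical direct computation using \eqref{assumption:fij}, and your explicit representation $x_i(t)=x_i(0)\exp\bigl(\int_0^t g_i(x(s))\,\de s\bigr)$ packages in one formula what the paper obtains separately via uniqueness of the zero solution for (i) and Gronwall's inequality for (ii). Your extra care about forward-invariance of $\mc X$ and the maximal interval of existence is sound but does not change the substance of the argument.
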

\begin{proof}
\begin{enumerate}
\item[(i)] It follows from the fact that any solution of \eqref{eq:imitation-dynamics} with $x_i(0)=0$ for some $i\in\mc A$ is such that $x_i(t)=0$, $\forall\,t\ge0$. 
\item[(ii)] Since $\dot x_i(t)\ge-C_ix_i(t)$, where $C_i=|\mc A|\max\{|r_i(x):\,x\in\mc X|\}$, Gronwall's inequality implies that $x_i(t)\ge x_i(0)e^{-C_it}>0$. 
\item[(iii)] For every $x\in\mc Z$ and $i,j\in\mc A$, one has that $x_ix_j(r_i(x)-r_j(x))=0$. Then, \eqref{assumption:fij} implies that $x_ix_j(f_{ij}(x)-f_{ji}(x)))=0$. 
\end{enumerate}
\end{proof}\medskip
\setcounter{theorem}{3}
\begin{lemma}\label{lemma:Lyapunov-app} Let $r:\mc X\to\R^{\mc A}$ be the reward function vector of a potential population game with potential function $\Phi:\mc X\to\R$. Then, every imitation dynamics \eqref{eq:imitation-dynamics} satisfying \eqref{assumption:fij} is such that 
\be\label{eq:Lyapunov}
\dot\Phi(x)=\nabla\Phi(x)\cdot\dot x\geq 0\,,\qquad\text{for all } x\in\mc X\,,
\ee
with equality if and only if $x\in\mc Z$, as defined in \eqref{def:Z}.  
\end{lemma}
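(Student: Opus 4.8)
The plan is to compute the Lyapunov derivative directly from the dynamics and then exploit two structural facts: the potential identity \eqref{potential}, which turns gradient differences into reward differences, and the sign condition \eqref{assumption:fij}, which forces a sign alignment making the resulting quadratic form nonnegative termwise. First I would substitute \eqref{eq:imitation-dynamics} into $\dot\Phi(x)=\nabla\Phi(x)\cdot\dot x$. Writing $\partial_i:=\partial\Phi(x)/\partial x_i$ and $g_{ij}:=f_{ij}(x)-f_{ji}(x)$ (so that $g_{ij}=-g_{ji}$), the definition gives $\dot x_i=x_i\sum_{j}x_j g_{ji}$, hence $\dot\Phi(x)=\sum_{i,j\in\mc A}\partial_i\,x_i x_j\,g_{ji}$. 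The key manipulation is to symmetrize this double sum: relabeling $i\leftrightarrow j$ in a second copy, averaging, and using $g_{ij}=-g_{ji}$ yields
\[
2\dot\Phi(x)=\sum_{i,j\in\mc A} x_i x_j\,(\partial_i-\partial_j)\,g_{ji}.
\]

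Next I would invoke the two hypotheses on each summand. By \eqref{potential}, $\partial_i-\partial_j=r_i(x)-r_j(x)$; although \eqref{potential} is stated for almost every $x$, both sides are continuous on $\mc X$ (since $r$ is Lipschitz and $\nabla\Phi$ extends continuously to the boundary), so the identity holds at every point of $\mc X$. By \eqref{assumption:fij}, $\sgn(g_{ji})=\sgn\big(r_i(x)-r_j(x)\big)$. Therefore $(\partial_i-\partial_j)$ and $g_{ji}$ share the same sign, so every term $x_i x_j(\partial_i-\partial_j)g_{ji}$ is nonnegative; summing gives $\dot\Phi(x)\ge 0$ for all $x\in\mc X$.

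For the equality case I would use the same sign alignment: the product $(\partial_i-\partial_j)g_{ji}=\big(r_i(x)-r_j(x)\big)g_{ji}$ vanishes exactly when $r_i(x)=r_j(x)$ (in which case \eqref{assumption:fij} forces $g_{ji}=0$ as well) and is strictly positive otherwise. Hence $\dot\Phi(x)=0$ iff $x_i x_j\big(r_i(x)-r_j(x)\big)=0$ for every pair $i,j$, that is, iff all actions in the support of $x$ share a common reward. I expect the only (minor) obstacle to be identifying this condition with membership in $\mc Z$: since $\ov r(x)=\sum_i x_i r_i(x)$ is a convex combination of the supported rewards, a common supported reward must equal $\ov r(x)$, which is precisely the characterization \eqref{Z-equivalent} of $\mc Z$; the converse is immediate from the same description. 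This closes both implications and completes the equivalence.
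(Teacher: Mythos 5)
Your proposal is correct and follows essentially the same route as the paper's proof: substitute the dynamics into $\nabla\Phi(x)\cdot\dot x$, symmetrize the double sum using the antisymmetry of $f_{ij}-f_{ji}$, convert gradient differences to reward differences via \eqref{potential}, and invoke the sign condition \eqref{assumption:fij} for termwise nonnegativity and for the equality characterization. The only (immaterial) differences are that you explicitly handle the ``almost every $x$'' qualifier by continuity and close the equality case via the characterization \eqref{Z-equivalent} of $\mc Z$, whereas the paper argues directly through the restricted Nash sets $\mc N_{\mc S}$.
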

\begin{proof} For every $x\in\mc X$, we have 
\be\label{eq:chain}
\begin{array}{l}
\dot\Phi(x)=\nabla\Phi(x)\cdot\dot x\\[7pt]
=\ds\nabla\Phi(x)\cdot\diag(x)(F'(x)-F(x))x\\[7pt]
=\ds\sum_{i,j\in\mc A}\frac{\partial\Phi(x)}{\partial x_i}x_ix_j\left(f_{ji}(x)-f_{ij}(x)\right)\\[7pt]
=\ds\frac12\sum_{i,j\in\mc A}x_ix_j\left(\frac{\partial\Phi(x)}{\partial x_i}-\frac{\partial\Phi(x)}{\partial x_j}\right)\left(f_{ji}(x)-f_{ij}(x)\right)\\[7pt]
=\ds\frac12\sum_{i,j\in\mc A}x_ix_j\left(r_i(x)-r_j(x)\right)\left(f_{ji}(x)-f_{ij}(x)\right)\,,
\end{array}
\ee
where the last identity follows from \eqref{potential}. It now follows from property \eqref{assumption:fij} of the imitation dynamics that, $\forall\,i,j\in\mc A$, 
$$\left(r_i(x)-r_j(x)\right)\left(f_{ji}(x)-f_{ij}(x)\right)\ge0\,.$$
Being all entries of a configuration $x\in\mc X$ are non-negative, 
$$x_ix_j\left(r_i(x)-r_j(x)\right)\left(f_{ji}(x)-f_{ij}(x)\right)\ge0.$$
Combining the above with \eqref{eq:chain}, we get that $\dot\Phi(x)\ge0$ (thus proving \eqref{eq:Lyapunov}).
%
Finally, $\dot\Phi(x)=0$ if and only if all the terms 
\be\label{eq:Z-equiv}x_ix_j(r_i(x)-r_j(x))\left(f_{ji}(x)-f_{ij}(x)\right)=0\,,\ee
$\forall\,i,j\in\mc A$. Using again \eqref{assumption:fij}, we have that 
$$(r_i(x)-r_j(x))\left(f_{ji}(x)-f_{ij}(x)\right)=0\iff r_i(x)=r_j(x).$$
Then \eqref{eq:Z-equiv} is equivalent to 
\be\label{eq:Z-equiv2}x_ix_j(r_i(x)-r_j(x))=0.\ee
To conclude the proof, we are simply left with showing that a configuration $x\in\mc X$ satisfies \eqref{eq:Z-equiv2} if and only if  it is critical, i.e., it belongs to $\mc Z$. Indeed, if $x\in\mc N_{\mc S}$ for some nonempty subset of actions $\mc S\subseteq\mc A$, then necessarily $r_i(x)=r_j(x)$ for every $i,j\in\mc A$ such that $x_ix_j>0$. On the other hand,  for any $x\in\mc X$ satisfying \eqref{eq:Z-equiv2}, it is immediate to verify that $x\in\mc N_{\mc S}$, where $\mc S=\{i\in\mc A:\,x_i>0\}$ is its support. 
\end{proof}\medskip

\end{document}